\newcommand{\be}{\begin{equation}}
\newcommand{\ee}{\end{equation}}
\newcommand{\bea}{\begin{eqnarray}}
\newcommand{\eea}{\end{eqnarray}}
\newcommand{\bse}{\begin{subequations}}
\newcommand{\ese}{\end{subequations}}
\def \h#1{\widehat{#1}}
\def \t#1{\widetilde{#1}}
\def \b#1{\overline{#1}}
\def \tb#1{\widetilde{\overline{#1}}}
\def \th#1{\widehat{\widetilde{#1}}}
\def \hb#1{{\widehat{\overline{#1}}}}
\def \bt#1{\widetilde{\overline{#1}}}
\def \bh#1{{\widehat{\overline{#1}}}}
\def \thb#1{\widehat{\widetilde{\overline{#1}}}}
\def\m@th{\mathsurround=0pt}
\mathchardef\bracell="0365
\def\upbrall{$\m@th\bracell$}
\def\undertilde#1{\mathop{\vtop{\ialign{##\crcr
    $\hfil\displaystyle{#1}\hfil$\crcr
     \noalign
     {\kern1.5pt\nointerlineskip}
     \upbrall\crcr\noalign{\kern1pt
   }}}}\limits}
\newtheorem{thm}{Theorem}[section]
\newtheorem{prop}[thm]{Proposition}
\newtheorem{lem}{Lemma}[section]
\numberwithin{equation}{section}
\title{Integrability of auto-B\"acklund transformations,\\
and solutions of a torqued ABS equation}
\author{Xueli Wei$^{1}$, ~Peter H. van der Kamp$^{2}$,~
Da-jun Zhang$^{1}$\footnote{Corresponding author. Email: djzhang@staff.shu.edu.cn}\\
{\small  ${}^{1}$Department of Mathematics,
 Shanghai University, Shanghai 200444,  P.R. China}\\
 {\small ${}^{2}$Department of Mathematics and Statistics, La Trobe University, Victoria 3086, Australia}}
\date{\today}
\begin{document}

\maketitle

\begin{abstract}
An auto-B\"acklund transformation for the quad equation $\mathrm{Q1}_1$ is considered as a discrete equation,
called $\mathrm{H2}^a$, which is a so called torqued version of $\mathrm{H2}$.
The equations $\mathrm{H2}^a$ and $\mathrm{Q1}_1$ compose a consistent cube,
from which a auto-B\"acklund transformation and a Lax pair for $\mathrm{H2}^a$ are obtained.
More generally it is shown that auto-B\"acklund transformations admit auto-B\"acklund transformations.
Using the auto-B\"acklund transformation for $\mathrm{H2}^a$ we derive a seed solution and a one-soliton solution.
From this solution it is seen that $\mathrm{H2}^a$ is a semi-autonomous lattice equation,
as the spacing parameter $q$ depends on $m$ but it disappears from the plain wave factor.

\vskip 6pt

\noindent
\textbf{Key Words:}\quad auto-B\"acklund transformation, consistency, Lax pair, soliton solution, torqued ABS equation,
semi-autonomous.
\end{abstract}

\section{Introduction}  \label{sec-1}

The subtle concept of integrability touches on global existence and regularity of solutions, exact solvability,
as well as compatibility and consistency (cf.\cite{HJN-book-2016}).
In the past two decades, the study of discrete integrable system has achieved a truly significant
development, which mainly relies on the effective use of the property of multidimensional consistency (MDC).
In the two dimensional case, MDC means the equation is Consistent Around the Cube (CAC)
and this implies it can be embedded consistently into lattices of dimension 3 and higher
\cite{NW-GLA-2001,N-PLA-2002,BS-IMRN-2002}.
In 2003, Adler, Bobenko and Suris (ABS) classified scalar quadrilateral equations that are CAC (with extra restrictions:
affine linear, D4 symmetry and tetrahedron property)
\cite{ABS-CMP-2003}. The complete list contains 9 equations.

In this paper, our discussion will focus on two of them, namely
\begin{equation}
\mathrm{Q1}_{\delta}(u,\t u,\h u,\th u; p,q)=p(u-\h u)(\t u-\th u)-q(u-\t u)(\h u-\th u)+\delta pq(p-q)=0
\label{Q1-d}
\end{equation}
and
\begin{equation}
\mathrm{H2}(u,\t u,\h u,\th u; p,q)=(u-\th u)(\t u-\h u)+(q-p)(u+\t u+\h u+\th u)+q^{2}-p^{2}=0.
\label{H2}
\end{equation}
Here $u=u(n,m)$ is a function on $\mathbb{Z}^2$,
$p$ and $q$ are spacing parameters in the $n$ and $m$ direction  respectively,
$\delta$ is an arbitrary constant which we set equal to 1 in the sequel, and conventionally, tilde and hat denote shifts
\begin{equation}\label{notation}
 u=u(n,m), ~ \t u=u(n+1,m), ~ \h u=u(n,m+1), ~ \th u=u(n+1,m+1).
\end{equation}
H2 is a new equation due to the ABS classification, while Q1$_\delta$ extends the well known cross-ratio equation,
or lattice  Schwarzian Korteweg-de Vries equation Q1$_{\delta=0}$.
Note that spacing parameters $p$ and $q$ can depend on $n$ and $m$ respectively,
which  leads to nonautonomous equations.

For a quadrilateral equation that is CAC, the equation itself defines its own (natural) auto-B\"acklund transformation (auto-BT),
cf. \cite{ABS-CMP-2003}.
For example, the system
\[\mathrm{Q1}_{\delta}(u,\t u, \b u,\tb u; p,r)=0,~~
\mathrm{Q1}_{\delta}(u,\h u,\b u,\hb u; q,r)=0,
\]
where $r$ acts as a wave number, composes an auto-BT between $\mathrm{Q1}_{\delta}(u,\t u,\h u,\th u; p,q)=0$
and $\mathrm{Q1}_{\delta}(\b u,\tb u,\hb u,\thb u; p,q)=0$.
Such a property has been employed in solving CAC equations, see
e.g. \cite{AHN-JPA-2007,AHN-JPA-2008,HZ-JPA-2009,HZ-JMP-2010,HZ-SIGMA-2011}.

Some CAC equations allow auto-BTs of other forms. For example, in \cite{Atk-JPA-2008}
it was shown that the coupled system
\begin{subequations}\label{BT-Q11}
\begin{align}
A: &\ (u-\t u)(\bt u- u)-p(u+\t u+\b u+\bt u+p+2r)=0,\\
B: &\ (u-\h u)(\bh u- u)-q(u+\h u+\b u+\bh u+q+2r)=0
\end{align}
\end{subequations}
provides an auto-BT between
\begin{equation} \label{Q}
Q\ : \ \text{Q1}_1(u, \t u,\h u,\th u; p,q)=0
\end{equation}
and $\overline{Q}:\ \mathrm{Q1}_{1}(\b u,\bt u,\bh u,\thb u; p,q)=0$,
and, that H2 acts as a nonlinear superposition principle for the BT \eqref{BT-Q11}.
One can think of the auto-BT as equations posed on the side faces of a consistent cube with
$Q$ and $\overline{Q}$ respectively on the bottom and the top face, as in Figure \ref{fig-1}.
Here one interprets $\b u=u(n,m,l+1)$, and $r$ serves as a spacing parameter for the third direction $l$.
The superposition principle can be understood as consistency of a 4D cube, see \cite{RB13,ZVZ}.

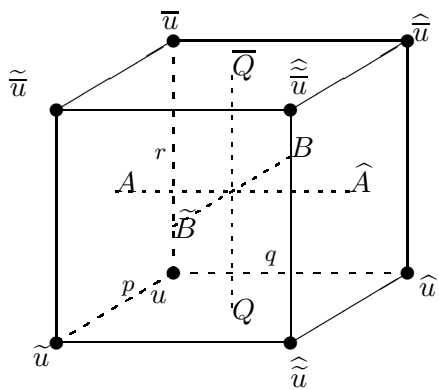
\begin{figure}[h]
\setlength{\unitlength}{0.08em}
\hspace{5cm}\begin{picture}(200,170)(10,-20)
  \put(100,  0){\circle*{6}} \put(0  ,100){\circle*{6}}
  \put( 50, 30){\circle*{6}} \put(150,130){\circle*{6}}
  \put(  0,  0){\circle*{6}}  \put(100,100){\circle*{6}}
  \put( 50,130){\circle*{6}}  \put(150, 30){\circle*{6}}
  \put( 0,  0){\line(1,0){100}}
  \put( 0,100){\line(1,0){100}}
  \put(50,130){\line(1,0){100}}
  \put(  0, 0){\line(0,1){100}}
  \put(100, 0){\line(0,1){100}}
  \put(150,30){\line(0,1){100}}
  \put(  0,100){\line(5,3){50}}
  \put(100,100){\line(5,3){50}}
  \put(100,  0){\line(5,3){50}}
   \dashline{3}(50,30)(0,0)
  \dashline{3}(50,30)(150,30)
\dashline{3}(50,30)(50,130)
   \dashline{2}(25, 65)(125, 65)
    \dashline{2}(75, 15)(75, 115)
      \dashline{2}(50, 50)(100, 80)
      \put(25, 65){$A$}
      \put(125, 65){$\h A$}
       \put(75, 10){$Q$}
       \put(75, 115){$\b Q$}
        \put(50, 45){$\t B$}
         \put(100, 80){$B$}
      \put(-10,-10){$\t u$}
     \put(100,-19){$ \th u $}
      \put(40,17){$ u $}
     \put(-20,106){$ \tb u $}
    \put(155,20){$ \h u $}
     \put(45,135){$ \b u  $}
     \put(100,107){$ \th{\b u} $}
    \put(152,130){$ \hb u $}
    \put(28,22){\footnotesize$p$}
\put(89,33){\makebox(0,0)[lb]{{\footnotesize$q$}}}
\put(42,79){\makebox(0,0)[lb]{{\footnotesize$r$}}}
\end{picture}
\caption{Consistent cube for $A, B$ and $Q$.
} \label{fig-1}
\end{figure}

In \cite{ZZV} the auto-BT \eqref{BT-Q11} and its superposition principle have been derived from the natural auto-BT for H2,
employing a transformation of the variables and the parameters. The equation
\begin{equation}\label{h2}
\begin{split}
\mathrm{H2}^a(u,\t u,\h u,\th u; p,q)&=
\mathrm{H2}(u,\th u,\h u,\t u; p+q,q)\\
&= (u-\t u)(\th u - \h u)-p(u+\t u+\h u+\th u+p+2q)=0
\end{split}
\end{equation}
was identified as a torqued version of the equation H2.
The superscript $^a$ refers to the {\em a}dditive transformation of the spacing parameter.
In \cite{Atk-JPA-2008} equation \eqref{h2} appeared as part of an auto-BT for Q$1_1$. 
The corresponding consistent cube is a special case of \cite[Eq. (3.9)]{Boll}. 
In \cite{ZZV} equation \eqref{h2} was shown to be an integrable equation in its own right, 
with an asymmetric auto-BT given by $A=\mathrm{H2}^a=0$ and $B=\mathrm{H2}=0$. 
Here we provide an alternative auto-BT for equation \eqref{h2} to the one that was provided in \cite{ZZV}.

In section \ref{sec-2}, we establish a simple but quite general result,
namely that if a system of equations $A=B=0$ comprises an auto-BT then both equations $A=0$ and $B=0$ admit an auto-BT themselves. In particular, the equation $\mathrm{H2}^a$ given by (\ref{h2}) is CAC,
with $\mathrm{H2}^a$ and $\mathrm{Q1}_{1}$ providing its an auto-BT.
We construct a Lax pair for $\mathrm{H2}^a$, which is asymmetric.
In section \ref{sec-3}, we employ the auto-BT for $\mathrm{H2}^a$ to derive a seed-solution and the corresponding one-soliton solution.
In the seed-solution the spacing parameter $q$ depends explicitly on $m$,
which makes $\mathrm{H2}^a$ inherent semi-autonomous.
Some conclusions are presented in section \ref{sec-4}.

\section{Auto-BTs for Auto-BTs, and a Lax pair for $\mathrm{H2}^a$}\label{sec-2}

To have a consistent cube with $\mathrm{H2}^a$ and $\mathrm{Q1}_{1}$ on the side faces,
providing an auto-BT for $\mathrm{H2}^a$, we assign equations to six faces as follows:
\begin{subequations}\label{6 eqs}
\begin{align}\label{2.1a}
& Q: \mathrm{H2}^a(u,\t u,\h u,\th u;p,q)=0, ~& \b Q: \mathrm{H2}^a(\b u,\tb u,\hb u,\th{\b u};p,q)=0,\\
& A: \mathrm{Q1}_{1}(u,\t u,\b u,\tb u;p,r)=0, ~& \h A: \mathrm{Q1}_{1}(\h u,\th u,\hb u,\th{\b u};p,r)=0,
\label{2.1b}
\\
\label{2.1c}
&B: \mathrm{H2}^a(u,\b u,\h u,\hb u;r,q)=0, ~& \t B: \mathrm{H2}^a(\t u,\tb u,\th u,\th{\b u};r,q)=0.
\end{align}
\end{subequations}
Then, given initial values  $u, \t u, \h u, \b u$, by direct calculation,
one can find that the value $\th{\b u}$ is uniquely determined.
Thus, the cube in Figure \ref{fig-1} with \eqref{6 eqs} is a consistent cube.

By means of such a consistency,
the side equations $A$ and $B$, i.e.
\begin{subequations}\label{bt}
\begin{align}
A:~~& p(u-\b u)(\t u-\tb u)-r(u-\t u)(\b u-\tb u)+pr(p-r)=0,\label{11}\\
B:~~& (u-\b u)(\hb u-\b u)-r(u+\b u+\h u+\hb u+r+2q)=0,\label{12}
\end{align}
\end{subequations}
compose an auto-BT for the $\mathrm{H2}^a$ equation \eqref{h2}.
Here $r$ acts as the B\"acklund parameter.

We note that the order of the variables in the equations (2.1) is quite particular.
Since equation \eqref{h2} is not D4 symmetric, i.e. we have
\[
\mathrm{H2}^a(u,\b u,\h u,\hb u;r,q)\neq \mathrm{H2}^a(u,\h u,\b u,\hb u;q,r),
\]
one has to be careful. The above result is explained by the following  useful result, cf. \cite[Section 2.1]{JH19} 
where the same idea was used to reduce the number of triplets of equations to consider for the classification of consistent cubes.

\begin{lem}\label{T-1}
Let
\begin{equation}
   A(u,\t u, \b u, \tb u ; p,r)=0,~~ B(u,\h u, \b u, \hb u; q,r)=0
\end{equation}
be an auto-BT for
\begin{equation}
   Q(u,\t u, \h u, \th u; p,q)=0.
\end{equation}
Then we have (i)
\begin{equation}\label{QB}
   Q(u,\t u, \b u, \tb u; p,r)=0,~~ B(u,\b u, \h u, \th u;r,q)=0
\end{equation}
is an auto-BT for
\begin{equation}\label{A}
   A(u,\t u, \h u, \th u; p,q)=0;
\end{equation}
and (ii)
\begin{equation}\label{QA}
   Q(u,\b u, \t u, \tb u;r,p)=0,~~ A(u,\b u, \h u, \bh u;r,q)=0
\end{equation}
is an auto-BT for
\begin{equation}\label{B}
   B(u,\t u, \h u, \th u; p,q)=0.
\end{equation}
\end{lem}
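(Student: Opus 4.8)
The plan is to read the hypothesis geometrically. Saying that $A(u,\t u,\b u,\tb u;p,r)=0$ and $B(u,\h u,\b u,\hb u;q,r)=0$ form an auto-BT for $Q(u,\t u,\h u,\th u;p,q)=0$ is exactly the statement that the cube of Figure \ref{fig-1}, carrying $Q$ (and its shift $\b Q$) on one pair of parallel faces, $A$ (and $\h A$) on a second, and $B$ (and $\t B$) on the third, with parameters $p,q,r$ attached to the tilde-, hat- and bar-directions, is consistent around the cube: given $u,\t u,\h u,\b u$ the opposite vertex $\thb u$ is uniquely determined, independently of the order of computation. This single property is symmetric in the three edge-directions and refers to no preferred ``base'' equation or ``B\"acklund'' direction. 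The entire proof then consists of re-reading this one consistent cube with a different edge-direction singled out as the B\"acklund direction; this is the device of \cite[Section 2.1]{JH19}.

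For part (i) I would single out the hat-direction, with parameter $q$, as the B\"acklund direction. The two faces carrying $A$ (which do not meet the hat-direction) then play the role of the transformed equation and its B\"acklund shift, while the four faces that do meet the hat-direction — the two $Q$-faces and the two $B$-faces — furnish the transforming relations. Since the cube is literally unchanged it is still consistent, so $\{Q,B\}$ is an auto-BT for $A$. To put this in the stated form \eqref{QB}--\eqref{A}, I swap the labels of the hat- and bar-directions (and correspondingly the parameters $q$ and $r$): the base $A(u,\t u,\b u,\tb u;p,r)$ then reads $A(u,\t u,\h u,\th u;p,q)$ as in \eqref{A}, and the two transforming equations become $Q(u,\t u,\b u,\tb u;p,r)=0$ together with the hat-bar-plane equation $B(u,\b u,\h u,\hb u;r,q)=0$, matching \eqref{QB}.

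Part (ii) is the same argument with the tilde-direction, parameter $p$, taken as the B\"acklund direction. Now the two $B$-faces (which do not meet the tilde-direction) carry the transformed equation and its shift, while the faces meeting the tilde-direction — the $Q$-faces and the $A$-faces — give the transforming relations, so $\{Q,A\}$ is an auto-BT for $B$. Under the cyclic relabelling of directions sending tilde $\to$ bar $\to$ hat $\to$ tilde (equivalently $p\to r\to q\to p$), the base $B(u,\h u,\b u,\hb u;q,r)$ becomes $B(u,\t u,\h u,\th u;p,q)$ as in \eqref{B}, and the two side equations become $Q(u,\b u,\t u,\tb u;r,p)=0$ and $A(u,\b u,\h u,\bh u;r,q)=0$, which is \eqref{QA}.

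The consistency required for both new auto-BTs is inherited for free, since it is one and the same cube throughout; there is no fresh CAC computation to perform. The only real work — and the single place to be careful — is the bookkeeping of argument orders and parameter placements, precisely because $A$, $B$ and $Q$ are not assumed D4-symmetric (the very point stressed for $\mathrm{H2}^a$ above, where $\mathrm{H2}^a(u,\b u,\h u,\hb u;r,q)\neq\mathrm{H2}^a(u,\h u,\b u,\hb u;q,r)$). Concretely, for each reinterpreted cube I must check slot by slot that every side face is fed its four vertices in the order matching the claimed equation in \eqref{QB} and \eqref{QA}, since permuting two arguments of a non-symmetric equation changes it; for instance the fourth argument of $B$ in \eqref{QB} must be the hat-bar vertex $\hb u$ of that face. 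This verification is routine but is exactly where an error would hide, so I would carry it out explicitly rather than appeal to symmetry.
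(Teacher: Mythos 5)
Your proof is correct and takes essentially the same route as the paper's own: the paper also reads the hypothesis as one consistent cube and proves (i) by interchanging $\h u \leftrightarrow \b u$, $q \leftrightarrow r$, and (ii) by exactly your cyclic relabelling $\t u \to \b u \to \h u \to \t u$, $p \to r \to q \to p$ (stated there as $\h u \to \t u \to \b u \to \h u$, $q \to p \to r \to q$). Incidentally, your fourth argument $\hb u$ in the second equation of \eqref{QB} is the correct one; the paper's printed $\th u$ there is evidently a typo, as the application \eqref{6 eqs} confirms.
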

 \begin{proof}
If $A=B=0$ is an auto-BT of $Q=0$, then they compose a consistent cube as in Figure \ref{fig-1}.
We prove the result by relabelling the fields at the vertices, cf. \cite[Lemma 2.1]{ZVZ}.
For (i) we interchange $\h u \leftrightarrow \b u$ and $q \leftrightarrow r$,
and for (ii) we perform the cyclic shifts $\h u \rightarrow \t u \rightarrow \b u \rightarrow \h u$
and $q \rightarrow p \rightarrow r \rightarrow q$.
\end{proof}
Applying (i) to the consistent cube with \eqref{BT-Q11} and \eqref{Q} we obtain \eqref{6 eqs}.
Applying (ii) yields the same, as Q1$_1$ has D4 symmetry.\\


3D consistency can be used to construct Lax pairs for quadrilateral equations
(cf.\cite{N-PLA-2002,ABS-CMP-2003,BHQK-FCM-2013}).
To achieve a Lax pair for $\mathrm{H2}^a$, we rewrite \eqref{bt} as
\begin{subequations}\label{bt-alt}
\begin{align}
& \tb u=\frac{u (p\t u - r\b u) + (p - r) (pr - \t u \b u)}{(p - r) u + r\t u - p \b u},\\
& \hb u=-r + \h u - \frac{2 r (q + \h u + u)}{r - u + \b u}.
\end{align}
\end{subequations}
Then, introducing $\b u=G/F$ and $\varphi=(G,F)^T$, from \eqref{bt-alt} we have
\begin{equation}\label{lax}
\t \varphi=L\varphi,~~\h\varphi=M\varphi,
\end{equation}
where
\begin{align*}
&    L=\gamma\left ( \begin{array}{cc}
              -ur-(p-r)\t u             &  pu\t u+(p-r)pr            \\
              -p                              &  (p-r)u+r\t u             \\
                   \end{array}
           \right ),\\
& M=\gamma'\left ( \begin{array}{cc}
             \h u-r              &  (-r+\h u)(r-u)-2r(q+u+\h u)       \\
             1                      &  r-u                                    \\
                   \end{array}
           \right ),
\end{align*}
with $\gamma=\frac{1}{\sqrt{p^{2}-(u-\t u)^{2}}},~\gamma'=\frac{1}{\sqrt{q+u+\h u}}$.
The linear system \eqref{lax} is compatible for solutions of \eqref{h2}
in the sense that $\mathrm{H2}^a$ is a divisor of $(\h L M)^2-(\t M L)^2$,
where the square can be taken either as matrix multiplication, or as component-wise multiplication.

\section{Seed and one-soliton solution}\label{sec-3}
Our idea of constructing solutions for \eqref{h2} is to use its auto-BT \eqref{bt}.
First, we need to have a simple solution as a ``seed".
To find such a solution, we take $\b u=u$ in the BT \eqref{bt}, i.e.
\begin{equation}\label{bt-fix}
(u-\t u)^{2}=p(p-r),~~  u+\h u=-q-\frac{r}{2}.
\end{equation}
Such a treatment is called \textit{fixed point idea}, which has proved effective in finding seed solutions
\cite{AHN-JPA-2007,HZ-JPA-2009}.

\begin{prop}
Parametrising
\begin{equation}\label{pq}
p=\frac{\alpha}{a},\quad \alpha = -\frac{ac}{a^2 - 1},\quad q=(-1)^{m}\beta-\frac{c}{2},
\end{equation}
and setting the seed BT parameter equal to $r=c$, the equations \eqref{bt-fix} allow the solution
\begin{equation}\label{00}
u_0=(-1)^{m}(\alpha n+\beta m+c_{0})
\end{equation}
where $c_{0}$ is a constant.
\end{prop}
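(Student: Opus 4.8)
The plan is to prove the statement by direct substitution, verifying that the proposed $u_0$ satisfies both scalar equations in \eqref{bt-fix}. Before substituting, I would record how $u_0=(-1)^m(\alpha n+\beta m+c_0)$ behaves under the two elementary shifts. In the $n$-direction the prefactor $(-1)^m$ is inert, so $\t u_0 = u_0 + (-1)^m\alpha$, giving $u_0-\t u_0 = -(-1)^m\alpha$. In the $m$-direction the prefactor flips sign, since $(-1)^{m+1}=-(-1)^m$, so $\h u_0 = -u_0-(-1)^m\beta$, giving the sum $u_0+\h u_0 = -(-1)^m\beta$. These two combinations are exactly the quantities appearing on the left-hand sides of \eqref{bt-fix}, so the verification reduces to matching them against the parametrised right-hand sides.

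For the first equation $(u-\t u)^2=p(p-r)$, the left side is $\alpha^2$ by the computation above. For the right side I would use \eqref{pq} with $r=c$: solving $\alpha=-ac/(a^2-1)$ for $c$ gives $c=\alpha/a-\alpha a$, whence $p-r=\alpha/a-c=\alpha a$ and therefore $p(p-r)=(\alpha/a)(\alpha a)=\alpha^2$. Thus both sides agree and the first equation holds identically. For the second equation $u+\h u=-q-\tfrac r2$, the left side is $-(-1)^m\beta$, while substituting $q=(-1)^m\beta-c/2$ and $r=c$ on the right gives $-q-\tfrac r2 = -(-1)^m\beta+\tfrac c2-\tfrac c2 = -(-1)^m\beta$. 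Hence this equation holds too, and the two checks together establish the claim.

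The calculation is routine; the only point that requires genuine care — and the step I would flag as the main obstacle — is the bookkeeping of the $m$-dependent spacing parameter. Since $q=(-1)^m\beta-c/2$ is read off at the current point $m$, the sign flip it undergoes between $m$ and $m+1$ must be reconciled with the sign flip of the $(-1)^m$ prefactor in $u_0$; had $q$ instead been evaluated at $m+1$ the two equations in \eqref{bt-fix} would be mutually inconsistent. This also explains why the ostensibly autonomous fixed-point system forces a semi-autonomous, $m$-dependent $q$.

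As motivation for why the ansatz \eqref{00} has precisely this shape, I would observe that the second equation in \eqref{bt-fix} is an affine first-order recursion in $m$ whose homogeneous part $\h u=-u$ forces the alternating factor $(-1)^m$, while the first equation, with $p,r$ independent of $n$, forces a constant $n$-increment and hence linear dependence on $n$. Superposing a particular solution of the inhomogeneous recursion with the alternating homogeneous mode then yields exactly the form $(-1)^m(\alpha n+\beta m+c_0)$, so that the parametrisation \eqref{pq} is recovered rather than merely guessed.
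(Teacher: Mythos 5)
Your proposal is correct and follows essentially the same route as the paper, which likewise proves the proposition ``by direct calculation,'' reducing \eqref{bt-fix} under the parametrisation \eqref{pq} with $r=c$ to $(u-\t u)^2=\alpha^2$ and $u+\h u=\pm(-1)^m\beta$, which the alternating-linear ansatz \eqref{00} manifestly satisfies; your version merely spells out the shift bookkeeping ($\t u_0=u_0+(-1)^m\alpha$, $\h u_0=-u_0-(-1)^m\beta$) and the algebra $p(p-r)=(\alpha/a)(\alpha a)=\alpha^2$ that the paper leaves implicit. Note only that your side remark that evaluating $q$ at $m+1$ would make the system ``mutually inconsistent'' overstates the point --- it would merely fail for the stated $u_0$ and could be repaired by the harmless relabelling $\beta\to-\beta$ (indeed the paper's displayed $u+\h u=(-1)^m\beta$ differs from your $u+\h u=-(-1)^m\beta$ by exactly such a sign convention) --- but this does not affect the validity of your verification.
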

\begin{proof}
By direct calculation; with the given parametrisations the equations \eqref{bt-fix} read
\[
(u-\t u)^{2}=\alpha^2,~~ u+\h u=(-1)^{m}\beta.
\]
\end{proof}

\noindent It can be verified directly that \eqref{00} also provides a solution to \eqref{h2}.
Next, we derive the one-soliton solution for \eqref{h2}, from the auto-BT \eqref{bt} with $u=u_0$ as a seed solution.
\begin{prop}
The equation \eqref{h2}, with lattice parameters \eqref{pq}, admits the one-soliton solution
\begin{equation}\label{1ss}
u_1=(-1)^{m}\left(\alpha n+\beta m+c_{0}
+\frac{ck}{1-k^2}\frac{1-\rho_{n,m}}{1+\rho_{n,m}}\right),
\end{equation}
where
\begin{equation}
\rho_{n,m}=\rho_{0,0}\left(\frac{a+k}{a-k}\right)^{n}\,\prod_{i=0}^{m-1}{\frac{(-1)^{i}-k}{(-1)^{i}+k}}
\end{equation}
with constant $\rho_{0,0}$, is the plain wave factor.
\end{prop}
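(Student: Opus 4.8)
The plan is to produce $u_1$ by applying the auto-BT \eqref{bt} to the seed $u_0$ of \eqref{00}, but now at a \emph{generic} value of the B\"acklund parameter $r$, rather than the fixed-point value $r=c$ that was used to build the seed. Because $u_0$ already solves $\mathrm{H2}^a$ and the pair $A=B=0$ of \eqref{bt} is an auto-BT, any field $\b u$ satisfying both \eqref{11} and $B$ of \eqref{2.1c} with $u=u_0$ is automatically a solution of \eqref{h2}; so it suffices to integrate the two BT equations, with the soliton wave number $k$ entering through the choice of $r$. No direct substitution of \eqref{1ss} into \eqref{h2} will then be required.

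First I would insert the ansatz $\b u=u_0+(-1)^m s$, with $s=s(n,m)$ a new unknown, into \eqref{11} and into $B$. Using the seed identities $u_0-\t u_0=-(-1)^m\alpha$ and $u_0+\h u_0=-(-1)^m\beta$ (which follow from \eqref{00} and \eqref{pq}), the prefactor $(-1)^m$ squares away in every product term, while the alternating part of $q=(-1)^m\beta-\frac{c}{2}$ absorbs the remaining $m$-dependence. I expect all terms carrying the seed profile $\alpha n+\beta m+c_0$ to cancel, and --- after simplifying the coefficients with $p=\frac{\alpha}{a}$ and $\alpha=-\frac{ac}{a^2-1}$ --- the two BT equations to collapse to the Riccati equations
\[ s\,\t s+ar\,(s-\t s)=r(r-c),\qquad s\,\h s-(-1)^m r\,(s-\h s)=r(r-c). \]

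Both Riccati equations share the constant solutions $s=\pm\sqrt{r(r-c)}$, and on setting $r=\frac{c}{1-k^2}$ one gets $\sqrt{r(r-c)}=\frac{ck}{1-k^2}=:\lambda$, the amplitude in \eqref{1ss}. I would then linearise via the M\"obius substitution $s=\lambda\frac{1-\rho}{1+\rho}$, which carries the two fixed points to $\rho=0$ and $\rho=\infty$; a short computation turns each Riccati equation into a purely multiplicative recurrence,
\[ \t\rho=\frac{ar+\lambda}{ar-\lambda}\,\rho=\frac{a+k}{a-k}\,\rho,\qquad \h\rho=\frac{-(-1)^m r+\lambda}{-(-1)^m r-\lambda}\,\rho=\frac{(-1)^m-k}{(-1)^m+k}\,\rho. \]
Integrating these first-order recurrences gives the plain wave factor $\rho_{n,m}$ of the proposition, and tracing back through $s$ and $\b u=u_0+(-1)^m s$ yields \eqref{1ss}. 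The two recurrences are compatible since the $n$-ratio is independent of $m$ and the $m$-ratio of $n$, so $\rho_{n,m}$ is well defined on $\mathbb{Z}^2$, as the CAC property guarantees.

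The main obstacle is the bookkeeping in the reduction step: one must keep careful track of the alternating factor $(-1)^m$ and of the explicitly $m$-dependent parameter $q$, and check that the seed profile cancels so that the resulting Riccati coefficients are constant in $n$ and only alternate in $m$. Once this cancellation is secured the rest is routine, since the fixed-point/M\"obius linearisation and the integration of the multiplicative recurrences are standard. The sign of $u_0-\t u_0$ must be taken correctly (it is $-(-1)^m\alpha$, not $+(-1)^m\alpha$), as the opposite choice would replace the $n$-ratio by its reciprocal.
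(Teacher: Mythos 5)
Your proof is correct and is essentially the paper's: the same BT-dressing of the seed \eqref{00} at the shifted parameter $r=c/(1-k^2)$, with your Riccati system for $s$ becoming exactly the paper's system \eqref{qq} for $\nu$ under the substitution $s=\kappa+\nu$, $\kappa=kr$ (your constant solutions $\pm\sqrt{r(r-c)}$ are $\pm\kappa$), the only cosmetic difference being that you integrate by M\"obius conjugation at the fixed points while the paper linearises via $\nu=f/g$ and explicit products of triangular $2\times 2$ matrices. One remark: you correctly worked with $B$ as defined in \eqref{2.1c} (equivalently \eqref{bt-alt}), which is exactly what makes the seed profile $\alpha n+\beta m+c_0$ cancel; the printed form \eqref{12} carries a typo, $(\hb u-\b u)$ in place of $(\hb u-\h u)$, under which the cancellation you flagged as the main obstacle would indeed fail.
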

\begin{proof}
Let
\begin{equation}\label{u-bar}
u_{1}=u_{0}+(-1)^{m}(\kappa + \nu),
\end{equation}
where $\kappa=kr$. With \eqref{pq} and parametrising the first BT parameter by
\begin{equation}\label{02}
r=\frac{c}{1-k^{2}},
\end{equation}
then substitution of $u=u_0$ and $\b u=u_1$ into the auto-BT \eqref{bt} yields
\begin{equation}\label{qq}
\t \nu=\frac{\nu E_+}{\nu+E_-},~~
\h \nu=\frac{\nu F_+(m)}{\nu+F_-(m)},
\end{equation}
where
\begin{equation}\label{EF}
E_\pm=-r(a\pm k),~ F_\pm(m)=r((-1)^{m}\mp k).
\end{equation}
The difference system \eqref{qq} can be linearized using $\nu = \frac{f}{g}$ and $\Phi=(f,g)^{T}$, which leads to
\begin{equation}\label{Phi-eq}
\Phi(n+1,m)=M\Phi(n,m),~~ \Phi(n,m+1)=N(m)\Phi(n,m),
\end{equation}
where
\begin{equation}
    M= \left ( \begin{array}{cccc}
              E_+             &  0          \\
              1                             &  E_-
              \end{array}
           \right ),~~
            N(m)= \left ( \begin{array}{cccc}
             F_+              &  0      \\
             1                      & F_-
                   \end{array}
           \right ),
\end{equation}
By ``integrating" \eqref{Phi-eq} we have
\begin{equation}\label{Phi-eq2}
\Phi(n,m)=\mathcal{M}(n)\Phi(0,m),~~ \Phi(n,m)=\mathcal{N}(m)\Phi(n,0),
\end{equation}
where
\begin{equation*}
    \mathcal{M}(n)=\left ( \begin{array}{cccc}
             E_+^{n}             &  0          \\
              \dfrac{E_-^{n}-E_+^{n}}{2\kappa}         & E_-^{n}
                   \end{array}
           \right ),~~
    \mathcal{N}(m)=\left ( \begin{array}{cccc}
            \prod\limits_{i=0}^{m-1}F_+(i)              &  0      \\
             \dfrac{1-(-1)^{m}}{2}\prod\limits_{i=0}^{m-2}F_+(i)    & \prod\limits_{i=0}^{m-1}F_-(i)                                \\
                  \end{array}
           \right ).
\end{equation*}
Thus, we get a solution to \eqref{Phi-eq2}:
\begin{equation}
\Phi(n,m)=\mathcal{M}(n) \mathcal{N}(m)\Phi(0,0),
\end{equation}
from which  $\nu=f/g$ is  obtained as
 \begin{equation}\label{nu-1}
 \nu=\frac{E_+^{n}\prod\limits_{i=0}^{m-1}F_+(i)\cdot\nu_{0,0}}{E_-^{n}
 \prod\limits_{i=0}^{m-1}F_-(i)
 +\frac{\Bigl(E_-^{n}\prod\limits_{i=0}^{m-1}F_-(i)-E_+^{n}
 \prod\limits_{i=0}^{m-1}F_+(i)\Bigr)\nu_{0,0}}{2\kappa}},
 \end{equation}
where $\nu_{0,0}=\frac{f_{0,0}}{g_{0,0}}$.
Introducing the plain wave factor
\begin{equation}\label{315}
\rho_{n,m}=\rho_{0,0}\left(\frac{E_+}{E_-}\right)^{n}\,\prod_{i=0}^{m-1} \frac{F_+(i)}{F_-(i)}
\end{equation}
with constant $\rho_{0,0}$, the above $\nu$ is written as
\begin{equation} \label{nu}
\nu=\frac{-2\kappa\rho_{n,m}}{1+\rho_{n,m}}
\end{equation}
where some constants are absorbed into $\rho_{0,0}=\frac{-\nu_{0,0}}{2\kappa+\nu_{0,0}}$.
Substituting \eqref{nu} into \eqref{u-bar} yields the one-soliton solution \eqref{1ss},
which solves \eqref{h2} with \eqref{pq} and \eqref{02}.
Note that in the plain wave factor \eqref{315} $n,m\in \mathbb{Z}$,
and when $m \leq 0$ the product $\prod_{i=0}^{m-1} (\cdot )$ is considered as $\prod_{i=m-1}^{0} (\cdot )$.
\end{proof}

It is interesting that the solution has an oscillatory factor $(-1)^m$ in $m$-direction
and  in the plain wave factor $\rho_{n,m}$ the spacing parameter $q$ for $m$-direction does not appear.
Considering the parametrization \eqref{pq} where $p$ is constant while $q$ depends on $m$,
we can say that the $\mathrm{H2}^a$ equation \eqref{h2} is semi-autonomous.

\section{Conclusions}\label{sec-4}
In this paper, we have shown that equations which constitute an auto-BT for a quad equation admit auto-BTs themselves.
We have focussed on one such equation, the torqued H2 equation denoted $\mathrm{H2}^a$ \eqref{h2},
which forms an auto-BT for Q1$_1$. This equation is not part of the ABS list of CAC quad equations,
as it is not symmetric with respect to $(n,p) \leftrightarrow (m,q)$.
The integrability of this equation is guaranteed as it is part of a consistent cube, cf. \cite{ZZV}. The equations $\mathrm{H2}^a$ and Q1$_1$ comprise an auto-BT from which a Lax pair was obtained. Using this auto-BT we have derived a seed solution and a one-soliton solution.
The parametrisation of these solutions show that  $\mathrm{H2}^a$ is a semi-autonomous equation. We hope to be able to construct higher order soliton solutions in a future paper.

\subsection*{Acknowledgments}
This work was supported by a La Trobe University China studies seed-funding research grant, and by the NSF of China [grant numbers 11875040 and 11631007].

{\small

}

\end{document}